\algnewcommand{\TRUE}{\textbf{true}}
\algnewcommand{\FALSE}{\textbf{false}}
\newcommand{\x}{\textbf{x}}
\newcommand{\y}{\textbf{y}}
\newcommand{\p}{\textbf{p}}
\newtheorem{thm}{Theorem}
\newtheorem{prop}[thm]{Proposition}
\newtheorem{cor}[thm]{Corollary}
\newtheorem{lem}[thm]{Lemma}
\newtheorem{defn}{Definition}
\newtheorem{exmp}{Example}
\title{\LARGE \bf
How Irrationality Shapes Nash Equilibria:\\
A Prospect-Theoretic Perspective\footnote{This work was supported by D\'efi Inria-EDF.\\ Email addresses of authors: {\tt\small \{ashok-krishnan.komalan-sindhu, helene.le-cadre, ana.busic\} @inria.fr}}}
\author[1,2]{Ashok Krishnan K.S.}
\affil[1]{Inria,
Paris, France.}
\affil[2]{DI ENS, École Normale Supérieure, PSL Research University,
Paris, France.}
\author[3]{H\'el\`ene Le Cadre}
\affil[3]{ Inria, Univ. Lille, CNRS, Centrale Lille, UMR 9189 CRIStAL, F-59000, Lille, France.}
\author[1,2]{Ana Bu\v si\'c}
\date{}
\begin{document}
\maketitle
\begin{abstract}
Noncooperative games with uncertain payoffs have been classically studied under the expected-utility theory framework, which relies on the strong assumption that agents behave rationally. However, simple experiments on human decision makers found them to be not fully rational, due to their subjective risk perception. Prospect theory was proposed as an empirically-grounded model to incorporate irrational behaviours into game-theoretic models. But, how prospect theory shapes the set of Nash equilibria when considering irrational agents, is still poorly understood. To this end, we study how prospect theoretic transformations may generate new equilibria while eliminating existing ones. Focusing on aggregative games, we show that capturing users' irrationality can preserve symmetric equilibria while causing the vanishing of  asymmetric equilibria. Further, there exist value functions which map uncountable sets of equilibria in the expected-utility maximization framework to finite sets. This last result may shape some equilibrium selection theories for human-in-the-loop systems where computing a single equilibrium is insufficient and comparison of equilibria is needed. 
\end{abstract}

\section{Introduction}

Game theory aims to model and analyze the outcomes of interactions between strategic agents in competition. Game theory has been applied to diverse fields, most prominently in  economics~\cite{aumann1992handbook,samuelson2016game}. The most commonly used paradigm in game theory is the von Neumann-Morgenstern expected utility (EUT) maximization~\cite{von1944theory}, which considers rational agents working with expected utilities. Under the EUT paradigm, when faced with a choice between different prospects (each of which is a set of outcomes with an associated distribution), agents pick the one that leads to the maximization of  expected utility. This leads to analytically simple and tractable models, which have proven to be very successful in theory~\cite{schoemaker1982expected}. However, a number of experiments with human participants, revealed that very often, humans are not fully rational~\cite{jallais2005allais}; for example, participants in the experiments favoured prospects which were certain, over prospects which were uncertain, but had higher expected utility. Deviations from the standard EUT model are labelled as \emph{bounded rational}~in opposition to the fully rational EUT setting~\cite{grune2007bounded}. In order to place humans at the heart of decision systems, one needs to rely on bounded rationality, in order to understand how irrationality impacts their decision process~\cite{gan2022application}.

Prospect theory (PT)~\cite{kai1979prospect,tversky1992advances} is a theory at the intersection of behavioural economics, decision theory and psychology, used to model humans' subjective perception of risk under uncertainty. A number of mathematical models  are available for  PT type behaviour~\cite{stott2006cumulative}. However, the application of prospect theory in games is still in its early stages. 

\textit{Literature Review}: 
 There have been a few works which used prospect theoretic tools in game theory. In \cite{shalev2000loss}, existence results for equilibrium are obtained, under a linear PT transformation function. Other works that incorporate PT in finite games and provide existence results include \cite{vahid2019modeling},~\cite{merrick2016modeling},~\cite{metzger2019non}~and \cite{keskin2016equilibrium}. In \cite{sanjab2017prospect}, prospect theory is used to study the effects of subjective perceptions in the case of a zero-sum security game which arises in a drone-based delivery system, by means of simulations. There have been few works that apply PT in general games where players have continuous strategy sets. An application of prospect theory to prosumer games in an electrical grid is done in \cite{el2017managing}. They identify conditions under which one can guarantee existence of equilibria, under a specific type of PT transformation. In \cite{li2014users}, a wireless random access game with two players with prospect theoretic preferences is studied, to obtain existence and structural results. Existence of PT-Nash equilibria for nonconvex nonsmooth PT preferences is presented in \cite{fochesato2025noncooperative}, along with an algorithm that converges to a critical Nash equilibrium with convergence guarantees. However, how irrationality impacts Nash equilibria solutions of an EUT noncooperative game is not yet well understood. 

 \textit{Our Contributions}:
 \begin{enumerate}
     \item We show that for concave PT value functions, in aggregative coordination games, the presence of irrational players leads to a vanishing of asymmetric Nash equilibria, while symmetric equilibria are preserved.
     \item We show that there exist PT value functions which can transform the infinite set of equilibria of the EUT noncooperative game to a finite set.
     \item We apply our results to an electricity market. We show numerically that considering homogeneous end users, irrationality leads to an increase in the tariffs charged to the end users. Further, if the selection function coincides with the maximization of the economic efficiency, irrationality may lead to a decrease of the tariffs.
 \end{enumerate}
 
\textit{Notation}:
To denote the empty set we use $\phi$, and $\mathds{1}$ for the indicator function. A vector of all ones is $\textbf{1}$. By $\textbf{z}_{-i}$ we mean the vector $\textbf{z}$ with the $i$-th component removed. By $\mathbb{E}_{d}[\cdot]$, we mean the expectation with respect to the distribution $d$ and $\mathrm{proj}_C[\cdot]$  is the projection onto a convex set $C$.

\section{Overview of Prospect Theory}
We first illustrate the framework of PT  using  terminology from the PT literature~ \cite{kai1979prospect},~\cite{stott2006cumulative}. Consider a player obtaining decreasing (in order of attractiveness) random outcomes $Z_j$ with probability $q_j$, for $j=1,...,M$, where $\sum_j q_j=1$. The collection $g=(Z_j,q_j)_{j=1}^M$ is called a \emph{prospect}. Under the expected utility paradigm, whenever a player is faced with the choice between two prospects $g_1=(Z_j(1),q_j(1))_{j=1}^M$ and $g_2=(Z_j(2),q_j(2))_{j=1}^M$, they choose the prospect that maximizes the expected utility; i.e., choose $g_1$ if
$
    \sum_{i=1}^MZ_j(1)q_j(1)\ge \sum_{i=1}^MZ_j(2)q_j(2);
$
pick $g_2$ otherwise. Under the prospect theoretic framework, however, the perception of outcomes changes with respect to a reference point. Outcomes above this reference are seen as gains, and those below as  losses. When faced with gains, agents will behave in a \emph{risk averse} manner, and when faced with losses, agents tend to be \emph{risk seeking} or \emph{risk neutral}. This perception is captured by an outcome to value map $v$, which, in order to model the above (and other) aspects of bounded rational behaviour, is assumed in general to be convex below a threshold, concave above it.  One such function is depicted in Fig. \ref{fig:distortion-curve-val}. Here the agents are risk averse for gains and risk neutral for losses, with the  zero being the reference point that demarcates gains from losses.
\begin{figure}
    \centering
    \begin{tikzpicture}[scale=1]
\begin{axis}[
xlabel=outcome,ylabel=value,
          xmax=5,ymax=4,
          xmin=-2,
          axis lines=middle,ticks=none]
\addplot[smooth,black,mark=none,
line width=1.5pt,domain=0:9.5,
samples=63]  {(1/log2(2.712))*log2(1+x)};
\addplot[smooth,black,mark=none,
line width=1.5pt,domain=-2:0,
samples=63]  {x};
\addplot[dotted,black,mark=none,
line width=1.5pt,domain=0:6,
samples=63]  {x};
\node at (axis cs:3,1.6) {$v$};
\end{axis}
\end{tikzpicture}
    \caption{Example of a prospect theoretic value function $v$ that maps outcomes to values; dotted line has unit slope.}
    \label{fig:distortion-curve-val}
\end{figure}

In addition to the notion of value relative to risk, the weight accorded to outcomes is modified as well. The modified value of the prospect is
$$
    V(g)=\sum_{j=1}^M \tilde q_jv(Z_j),
$$
where $v$ is a function similar to the one in  Fig.~\ref{fig:distortion-curve-val}, and $\tilde q_j$ are the weights for the distorted outcomes, given by~\cite{stott2006cumulative},
$$
    \tilde q_1 =\pi(q_1),
    \tilde q_j = \pi(\sum_{m=1}^j q_m)-\pi(\sum_{m=1}^{j-1} q_m),~j>1,
$$
for an appropriate weighting function $\pi$, which is monotone increasing and bounded between 0 and 1. This models  the overweighting of small probabilities, and the underweighting of large probabilities, for large outcomes. Faced with a choice between two prospects $g_1$ and $g_2$, the agent will make a decision comparing their PT modified values $V(g_1)$ and $V(g_2)$, and choosing the better option. It is clear that these decisions are likely to be very different from those predicted by expected utility theory.

\section{Game Theoretic Framework}\label{sec:GameModel}
 The set of players in the game is denoted as $\mathcal N := \{1,...,N,N+1\}$. Player $N+1$ (referred to as the coordinator,) coordinates between the other $N$ players, and interacts with them via its strategy $\p\in\mathcal P$. The other players ($1$ to $N$), choose strategies $x_i\in\mathcal X_i$, for $i=1,...,N$, respectively. We will assume that $\mathcal X_i$ are compact and convex. Let $\mathcal X=\prod_{i=1}^N\mathcal X_i$ denote the joint feasibility set. For player $i$, let $J_i(x_i,\x_{-i},\p,\xi)$ denote the outcome received by player $i$ when the coordinator plays $\p$, other players have joint strategy $\x_{-i}$, and the common randomness is $\xi$. In particular, we assume that the outcomes have the form
 \begin{align}\label{eq:defn-pre-utility}
		J_i(x_i,\x_{-i},\p,\xi)=-a_iU(\x)+b_i-C_i(p_i,x_i,\xi),
	\end{align}
    where $U(\x)$ denotes an aggregative common usage benefit,
    \begin{align}
		U(\x)=\left(\frac{1}{N}\sum_{j=1}^Nx_j-\frac{1}{N}\sum_{j=1}^Ny_j\right)^{2},
	\end{align} 
    and $C_i(p_i,x_i,\xi)$ is an individual cost,
   \begin{align}
		C_i(p_i,x_i,\xi)=p_ix_i+(y_i-x_i)\xi,
	\end{align} 
    and $a_i>0,b_i\ge 0$.  The outcomes represent a game where players gain by getting their collective strategy $\x$ close to some desired collective target $\y$. The cost may be interpreted as the individual cost for not meeting the individual target $y_i$. 
    We will assume that the random variable $\xi$ takes values over the set $\Xi=\{\xi_1,...,\xi_M\}$ with distribution $q=(q_1,...,q_M)$, 
$
        \overline{\xi}=\mathbb{E}_q[\xi],
$
and positive variance.    
 The players 1 to $N$  play a subgame based on utilities derived from the outcomes. We have two games here, the EUT and PT games; the first  follows the expected utility paradigm and the second uses prospect theory. 

\subsection{The EUT  Game Model}\label{sec:GameModel-EUT}
We define the EUT game as the tuple 
$$\mathcal G_{\text{EUT}}=(\mathcal N,(\overline J_i)_{i\in \mathcal N},\mathcal P, (\mathcal X_i)_{i\in \mathcal N}),$$
with utilities 
$$\overline J_i=\mathbb{E}_q[J_i(x_i,\x_{-i},\p,\xi)].$$
For this game, we define a Nash equilibrium for the subgame between the players $1,...,N$, for a fixed $\p$ as follows.
\begin{defn}
    For $\p\in\mathcal P$, a joint strategy $\x^*=(x_i^*)_{i=1}^N$ is a Nash equilibrium for $\mathcal G_{EUT}$ if
    \begin{align}\label{def:Nash-eqbm-EUT}
    \overline J_i(x_i^*,\x_{-i}^*,\p)\ge  \overline J_i(x_i,\x^*_{-i},\p)~\forall x_i\in\mathcal X_i~,i=1,...,N.
\end{align}
\end{defn}

Thus $\p$ can be considered an incentive from the coordinator, which results in  player response $\x^*(\p)$. Since we assume that the players are playing a coordination game, it is likely that there are multiple Nash equilibria~\cite{cooper1999coordination}. We denote the set of all Nash equilibria at $\p$ by $X^*(\p)$. The coordinator picks one of the equilibria by means of an optimization, 
\begin{align}
    \hat\x^{*}(\p)=\arg_{\x^*(\p)\in X^*(\p)}\max\mathscr F(\x^*(\p)),
\end{align}
with $\mathscr{F}(\cdot)$ being a pre-specified selection function. The coordinator then drives the system to a particular behaviour by an appropriate choice of $\p$, by optimizing some function   $\overline J_0(\p,\hat\x^{*}(\p))$. We present two possible candidates for this function in Section \ref{sec:numerical-study}.

\subsection{PT Game Model}\label{sec:PT-framework-intro}

We define the PT  game as the tuple
$$\mathcal G_{\text{PT}}=(\mathcal N,(\widetilde J_i)_{i\in\mathcal N},\mathcal P, (\mathcal X_i)_{i\in\mathcal N}),$$
where the utilities  are
$$
    \widetilde J_i(x_i,\x_{-i},\p)=\mathbb{E}_q[v_i\circ J_i(x_i,\x_{-i},\p,\xi_j)],
$$
where $v_i$  are concave, increasing and differentiable, and $v_{N+1}(z)=z$.

An example of a $v_i$ that satisfies this assumption is
\begin{align}\label{eq:log-v-function}
    v_i(z)=
         \log(1+z)\mathds{1}_{\{z\ge 0\}}+z\mathds{1}_{\{z<0\}},~i=1,...,N.
     \end{align}
The assumption on $v_{N+1}$ ensures that the behaviour of the coordinator does not change. Thus, the PT game corresponds to a setting where the player \emph{bounded rationality} is taken into account, while the coordinator is always considered as a rational player. The coordinator may not be able to depart from rational behaviour owing to ethical or legal considerations. 
This is true, for example, when the coordinator represents an industry or governmental agency with system-based objective functions, and the other players are humans with subjective risk preferences.
\begin{defn}
    For $\p\in\mathcal P$, a joint strategy $\x^*=(x_i^*)_{i=1}^N$ is a Nash equilibrium for the  PT game if
    \begin{align}\label{def:Nash-eqbm-PT}
    \widetilde J_i(x_i^*,\x_{-i}^*,\p)\ge  \widetilde J_i(x_i,\x^*_{-i},\p)~\forall x_i\in\mathcal X_i~,i=1,...,N.
\end{align}
\end{defn}

The set of Nash equilibria in the PT framework is denoted by $\widetilde X^*(\p)$. Since multiple equilibria are possible in the PT setting, we use the same selection function $\mathscr F(\cdot)$ as in the EUT case, and the coordinator takes $\overline J_0$ as objective function.

\section{The EUT Game: Structure and Solutions}
The game $\mathcal G_{\text{EUT}}$ yields multiple Nash equilibria at certain strategies $\p$ of the coordinator, as seen below.
\begin{lem}\label{lem:Nash-eq-set-EUT}
     We have
     \begin{align}\label{eqn:equilibrium-criterion}
         X^*(\p)=
             \left\{\x\in\mathcal X:\sum_{j=1}^Nx_j=\sum_{j=1}^Ny_j-\frac{N^2}{2}\kappa_{\p}\right\}\mathds{1}_{\{\p\in\mathcal P_1\}}
     \end{align}
     where $$\kappa_{\p}=\frac{p_1-\overline\xi}{a_1},~
         \mathcal P_1=\left\{\p\in\mathcal P:\frac{p_i-\overline\xi}{a_i}=\frac{p_1-\overline\xi}{a_1},~ \forall i\right\}.
    $$
 \end{lem}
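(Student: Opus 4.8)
The plan is to reduce the Nash condition \eqref{def:Nash-eqbm-EUT} to a single linear equation in the aggregate $\sum_j x_j$, using that each player's expected payoff is strictly concave and essentially quadratic in her own decision variable. First I would write out the expected payoffs: since $U(\x)$ does not depend on $\xi$ and $C_i$ is affine in $\xi$, taking $\mathbb{E}_q[\cdot]$ replaces $\xi$ by $\overline\xi$, so
\[
\overline J_i(x_i,\x_{-i},\p)=-a_i\Big(\tfrac{1}{N}\sum_{j}x_j-\tfrac{1}{N}\sum_{j}y_j\Big)^{2}+b_i-p_ix_i-(y_i-x_i)\overline\xi.
\]
As a function of $x_i$ (with $\x_{-i}$ and $\p$ held fixed) this is an affine term minus the positive multiple $a_i/N^2$ of a squared affine function, hence strictly concave on $\mathcal X_i$, with derivative $\partial\overline J_i/\partial x_i=-\frac{2a_i}{N^2}\big(\sum_j x_j-\sum_j y_j\big)-p_i+\overline\xi$.

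Next I would exploit strict concavity: a stationary point of a concave function is automatically a global maximizer over all of $\mathbb R$, hence over $\mathcal X_i$. So $\x^*\in\mathcal X$ is a Nash equilibrium as soon as $\partial\overline J_i/\partial x_i=0$ holds at $\x^*$ for every $i=1,\dots,N$, and the stationarity equation of player $i$ rearranges exactly to $\sum_j x_j^*=\sum_j y_j-\frac{N^2}{2}\kappa_i$ with $\kappa_i:=(p_i-\overline\xi)/a_i$. Since the left-hand side $\sum_j x_j^*$ is common to all $i$, these $N$ equations are simultaneously solvable precisely when $\kappa_i=\kappa_1$ for all $i$, i.e. when $\p\in\mathcal P_1$, in which case they collapse to the single affine constraint $\sum_j x_j^*=\sum_j y_j-\frac{N^2}{2}\kappa_{\p}$ — exactly the set on the right of \eqref{eqn:equilibrium-criterion}. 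This gives the inclusion $\supseteq$ when $\p\in\mathcal P_1$; conversely, a Nash equilibrium at which every player is interior must satisfy all $N$ stationarity equations, so it lies in that hyperplane and forces $\p\in\mathcal P_1$, while if $\p\notin\mathcal P_1$ no consistent value of the aggregate exists, yielding $X^*(\p)=\phi$ and hence the indicator.

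The step I expect to require the most care is handling Nash equilibria supported on the boundary of some $\mathcal X_i$, where the best-response condition is the variational inequality $\frac{\partial\overline J_i}{\partial x_i}(x_i^*,\x_{-i}^*,\p)\,(x_i-x_i^*)\le 0$ for all $x_i\in\mathcal X_i$ rather than plain stationarity, so a player could a priori be pinned at an endpoint with a nonzero derivative. I would dispatch this by noting that on $\mathcal P_1$ all the $\kappa_i$ coincide: if even one player is interior at $\x^*$ then every derivative vanishes there and no player is strictly pinned, and a profile mixing upper and lower endpoints is impossible unless the aggregate again equals $\sum_j y_j-\frac{N^2}{2}\kappa_{\p}$ (so it lies in the hyperplane anyway); the only remaining possibility is a uniform-corner profile, which can be an equilibrium only when that corner already lies on the hyperplane, or in the degenerate regime where the hyperplane misses $\mathcal X$ altogether — a regime excluded on the range of $\p$ relevant to the coordinator (outside which both sides of \eqref{eqn:equilibrium-criterion} are empty). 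Modulo this boundary bookkeeping, the result follows from the elimination-of-the-aggregate computation above.
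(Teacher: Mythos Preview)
Your argument is essentially the paper's own proof: compute $\overline J_i$, use strict concavity in $x_i$ to reduce the Nash condition to the first-order system, and observe that the $N$ equations share the common left-hand side $\sum_j y_j-\sum_j x_j$, so they are consistent exactly on $\mathcal P_1$ and then collapse to \eqref{eqn:simplified-x-y-p}. The paper stops there; your additional paragraph on boundary equilibria (via the variational inequality and the sign of the common derivative on $\mathcal P_1$) is extra care that the paper simply does not address --- it implicitly treats the best responses as interior.
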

\begin{proof}
    For a fixed $\p$, the Nash equilibrium~\eqref{def:Nash-eqbm-EUT} is given by any point $\x^*=(x_i^*)_{i=1}^N$ that satisfies
	\begin{align}\label{eqn:Nash-condtn-loc}
		x_i^{*}\in\arg_{x_i}\max \overline J_i(x_i,\x_{-i}^*,\p),~i=1,...,N.
	\end{align}
    Since $\frac{\partial^2}{\partial x_i^2}\overline J_i(x_i,\x_{-i},\p)<0$, it follows that any $\x^*$ that satisfies
$$
     \frac{\partial}{\partial x_i}\overline J_i(x_i^*,\x_{-i}^*,\p) =0,~i=1,..,N,$$
     
 is a Nash equilibrium. This is equivalent to the family of equations,
 \begin{align}\label{eqn:condtn-1}
    \frac{N^2}{2a_i}(p_i-\overline\xi)= (\sum_{j=1}^Ny_j-\sum_{j=1}^Nx_j),~i=1,...,N.
 \end{align}
If $\p\notin\mathcal P_1$, the system of equations \eqref{eqn:condtn-1} is inconsistent and there is no solution. If $\p\in\mathcal P_1$, we can see that the system \eqref{eqn:condtn-1} is equivalent to the single equation
\begin{align}\label{eqn:simplified-x-y-p}
         \sum_{j=1}^Nx_j-\sum_{j=1}^Ny_j=-\frac{N^2}{2}\kappa_{\p}.
     \end{align}
\end{proof}
At any $\p\in\mathcal P_1$ there are uncountably many Nash equilibria. The coordinator has to select one of these equilibria.

\textit{Selection of Equilibrium}: To select an equilibrium, we use a variant of Jain's fairness index~\cite{jain1984quantitative}. This criterion chooses a \lq fair\rq~ equilibrium. For $N$ different players being allocated $\delta_i$ units each of some resource, Jain's index is given by
 \begin{align}
     \mathcal{J}(\delta_1,...,\delta_N):=\frac{1+(\sum_i \delta_i)^2}{1+N\sum_i\delta_i^2}.
 \end{align}
If the aggregator seeks fairness in terms of the deviation from planned consumption for each customer, i.e., $\delta_i\triangleq x_i-y_i$, it selects the element $\hat\x^{*}$ such that
\begin{align}\label{eqn:opt-jain}
    \hat\x^{*}(\p) =\arg_{\x\in X^*(\p)}\max \mathcal{J}(x_1-y_1,...,x_N-y_N).
\end{align}
For $\p\in\mathcal P_1$ and $\x\in X^*(\p)$, note that $\sum_i\delta_i=-\frac{N^2}{2}\kappa_{\p}$. Hence \eqref{eqn:opt-jain} is equivalent to
 \begin{align}
     \min \sum_{j=1}^N (x_j-y_j)^2~s.t.~\x\in X^*(\p),
 \end{align}
 which is solved by
 \begin{align}\label{eq:selected-eq}
     \hat x_i^{*}=y_i-\frac{N(p_i-\overline{\xi})}{2a_i},~\forall i.
 \end{align}
 This equilibrium requires consumers to  have information only about their own parameters.

 \textit{Convergence to Equilibrium}:
It is easy to see that the game between the $N$ players is a weighted potential game~\cite{monderer1996potential}.
 \begin{lem}
    For any price $\p\in\mathcal{P}$, the game between the $N$ players is a weighted potential game, i.e., for all $i$, and all $\x=(x_i,\x_{-i}),\tilde\x=(x_i^{\prime},\x_{-i})$, 
     \begin{align*}
        \overline J_i(x_i,\x_{-i},\p)-\overline J_i(x_i^{\prime},\x_{-i},\p)=a_i(\Phi_{\p}(\x)-\Phi_{\p}(\tilde\x)),
     \end{align*}
     where
     $$
         \Phi_{\p}(\x)=-\frac{(\sum_{j=1}^Nx_j-\sum_{j=1}^Ny_j)^2}{N^2}-\kappa_{\p}\sum_{j=1}^Nx_j.
     $$
\end{lem}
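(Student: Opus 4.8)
The plan is a direct verification, since both $\overline J_i$ and $\Phi_{\p}$ are, as functions of $\x$, a scalar multiple of the single quadratic $U$ plus a term that is linear in $\x$, so unilateral‑deviation differences collapse to elementary computations. First I would evaluate the expectation defining $\overline J_i$. The randomness $\xi$ enters $J_i$ only through the term $(y_i-x_i)\xi$ of $C_i$, so by linearity of expectation
\begin{align*}
\overline J_i(x_i,\x_{-i},\p)=-a_iU(\x)+b_i-p_ix_i-(y_i-x_i)\overline\xi .
\end{align*}

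Next, fix a player $i$ and two profiles $\x=(x_i,\x_{-i})$ and $\tilde\x=(x_i',\x_{-i})$ that differ only in the $i$‑th coordinate. The constant $b_i$ and the constant term $y_i\overline\xi$ cancel, leaving
\begin{align*}
\overline J_i(x_i,\x_{-i},\p)-\overline J_i(x_i',\x_{-i},\p)=-a_i\big(U(\x)-U(\tilde\x)\big)-(p_i-\overline\xi)(x_i-x_i').
\end{align*}
For the potential side I would rewrite $U(\x)=\tfrac{1}{N^2}\big(\sum_jx_j-\sum_jy_j\big)^2$, so that $\Phi_{\p}(\x)=-U(\x)-\kappa_{\p}\sum_jx_j$; since the two profiles agree off coordinate $i$ we have $\sum_jx_j-\sum_jx_j'=x_i-x_i'$, whence
\begin{align*}
a_i\big(\Phi_{\p}(\x)-\Phi_{\p}(\tilde\x)\big)=-a_i\big(U(\x)-U(\tilde\x)\big)-a_i\kappa_{\p}(x_i-x_i').
\end{align*}

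Comparing the last two displays, the quadratic parts coincide term by term, so the claimed identity holds exactly when the linear coefficients match, i.e. when $a_i\kappa_{\p}=p_i-\overline\xi$ for every $i$ — which is precisely the condition defining $\mathcal P_1$ (the common value of $\tfrac{p_i-\overline\xi}{a_i}$ being $\kappa_{\p}$). I do not expect any genuine obstacle: the computation is routine once $U$ is written in terms of $\sum_jx_j$, and the \emph{weighted} nature of the potential game is simply the statement that the weights $a_i$ absorb the heterogeneity of the coefficients $\tfrac{p_i-\overline\xi}{a_i}$. The one point I would want to nail down is whether the statement should be read as restricted to $\p\in\mathcal P_1$; this is in any case the pertinent regime, since by Lemma~\ref{lem:Nash-eq-set-EUT} the subgame has no equilibrium for $\p\notin\mathcal P_1$, whereas maximizing $\Phi_{\p}$ over the compact convex set $\mathcal X$ always produces one.
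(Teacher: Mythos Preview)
Your verification is correct and is exactly the routine computation the paper has in mind; the paper omits the proof entirely, saying only ``It is easy to see'', so there is nothing further to compare.

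Your closing remark is the substantive point worth flagging: with $\kappa_{\p}=\tfrac{p_1-\overline\xi}{a_1}$ fixed by player~$1$, the identity $a_i\kappa_{\p}=p_i-\overline\xi$ for all $i$ is equivalent to $\p\in\mathcal P_1$, so the lemma as written (``for any $\p\in\mathcal P$'') overstates the range of validity. Your reading is the right one: the paper uses this lemma only through $X^*(\p)=\arg\max_{\x}\Phi_{\p}(\x)$ and the convergence of Algorithm~\ref{alg:con-alg-grad}, both of which are vacuous or inapplicable for $\p\notin\mathcal P_1$ anyway, so restricting to $\mathcal P_1$ loses nothing.
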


 We also see that
$
X^*(\p)=\arg_{\x}\max\Phi_{\p}(\x).
$
 The potential structure inspires the following gradient play algorithm, Algorithm \ref{alg:con-alg-grad}, for converging to Nash equilibria of the EUT game, for a fixed $\p$. The number of iterations $\hat N$ can be chosen based on appropriate convergence criteria. Nash equilibria obtained by this algorithm are then passed on to the coordinator, which chooses one of the equilibria based on the fairness criterion, and then updates the price. This update step is shown in Algorithm \ref{alg:sel-alg-prix}.
 
 \begin{algorithm}
  \caption{Distributed gradient play for learning Nash equilibria}
  \begin{algorithmic}[1]
  \State $n=1,\epsilon_n>0,\x=\x(0)$
  \While {$n\le \hat N$}
      \State $i\gets n\mod N$
      \State $\Delta_i \gets -\frac{2a_i}{N^2}(\sum_jx_j-\sum_jy_j)-p_i+\overline\xi$
      \State $x_i \gets x_i+\epsilon_n \Delta_i$
      \State $n \gets n+1$
    \EndWhile
  \end{algorithmic}\label{alg:con-alg-grad}
\end{algorithm}

\begin{algorithm}
  \caption{Coordinator's update for $\p$}
  \begin{algorithmic}[1]
  \State $t=0,\p=\p(0),\delta>0$
  \While {$t\ge 0$}
    \State Players $1,..,N$ do algorithm  \ref{alg:con-alg-grad} $N_0$ times to obtain subset of equilibria $\hat X^*(\p)$
      \State Select $\hat\x(\p)=\arg_{\x^*(\p)\in \hat X^*(\p)}\max\mathscr F(\x^*(\p))$
      \State $\p\gets\p+\delta\frac{\partial J_0(\x,\p)}{\partial \p}|_{\x=\hat\x(\p)}$
      \State $t\gets t+1$
    \EndWhile
  \end{algorithmic}\label{alg:sel-alg-prix}
\end{algorithm}
Due to the potential structure of the game, it follows from \cite[Theorem 1]{ermoliev2002repeated} that  $\x(n)$ in Algorithm \ref{alg:con-alg-grad} converges to some point in $X^*(\p)$, as $n\to\infty$. Note that Algorithm \ref{alg:con-alg-grad} uses global information of the actions of all players, but does not need information about other players' utilities. Thus it can be implemented as a distributed algorithm with information sharing.

\section{The PT Game: Preservation and Vanishing of Equilibria}
We have the PT utilities,
$
    \widetilde J_i(x_i,\x_{-i},\p).
$
As in the EUT case, we consider the game  for a fixed $\p$. Since $\widetilde J_i$ is concave in $x_i$, the first derivative being zero is necessary and sufficient for equilibrium. Setting $\frac{\partial \widetilde J_i}{\partial x_i}=0$  is equivalent to the fixed point equation
$$\x=G_{\p}(\x),$$
where $G_{\p}(\x)=[G_{\p,1}(\x) \cdots G_{\p,N}(\x)]$, with
\begin{align*}
    G_{\p,i}(\x)\triangleq \sum_{j=1}^Ny_j-\sum_{j=1,j\ne i}^Nx_j-\frac{N^2}{2a_i}(p_i-\mathbb{E}[\tilde\xi^{\x,\p}(i)]),
\end{align*}
where $\tilde\xi^{\x,\p}(i)$ is a random variable over $\Xi$ with distribution
\begin{align}\label{eq:transf-prob-dist}
    \mathbb{P}[\tilde\xi^{\x,\p}(i)=\xi_k]=\frac{q_kv_i^{\prime}(J_i(x_i,\x_{-i},\p,\xi_k))}{\sum_{j=1}^Mq_jv_i^{\prime}(J_i(x_i,\x_{-i},\p,\xi_j))}.
\end{align}
In the case of the EUT game, we had $\tilde\xi^{\x,\p}(i)$  replaced by $\xi$. The impact of the PT modification is to shift the mean of the randomness.  However, the dependence of this shift on each strategy $\x$,  price $\p$ and player $i$ makes it much harder to obtain an explicit form for the equilibrium, unlike what we had in the EUT setting. 
\begin{lem}\label{lem:charac-eq-PT}
    A joint strategy $\x$ is a Nash equilibrium for the PT game at price $\p$ if and only if there exists $\kappa_{\p,\x}$ that satisfies,
    \begin{align}\label{eqn:constr-coupl-PT}
        \sum_{j=1}^Nx_j-\sum_{j=1}^Ny_j=-\frac{N^2}{2}\kappa_{\p,\x},
    \end{align}
    and 
    \begin{align}\label{eq:defn-kappa-p-x}
      \kappa_{\p,\x}=\frac{p_i-\mathbb{E}[\tilde\xi^{\x,\p}(i)]}{a_i}= \frac{p_1-\mathbb{E}[\tilde\xi^{\x,\p}(1)]}{a_1},~i=1,...,N.
    \end{align}
\end{lem}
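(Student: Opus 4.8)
The plan is to mirror the proof of Lemma~\ref{lem:Nash-eq-set-EUT}. As noted just before the statement, $\widetilde J_i$ is concave in $x_i$ (it is the composition of the concave increasing $v_i$ with $J_i$, which is concave in $x_i$ since $-a_iU$ is a concave quadratic and $-C_i$ is affine in $x_i$, and the expectation preserves concavity). Hence a joint strategy $\x$ is a Nash equilibrium of $\mathcal G_{\text{PT}}$ at $\p$ if and only if the stationarity conditions $\frac{\partial}{\partial x_i}\widetilde J_i(x_i,\x_{-i},\p)=0$ hold for all $i=1,\dots,N$, and the whole lemma reduces to rewriting these $N$ equations in the claimed form.

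First I would differentiate $\widetilde J_i(x_i,\x_{-i},\p)=\sum_{k=1}^M q_k\,v_i\!\left(J_i(x_i,\x_{-i},\p,\xi_k)\right)$ by the chain rule, obtaining
$$\frac{\partial}{\partial x_i}\widetilde J_i(x_i,\x_{-i},\p)=\sum_{k=1}^M q_k\,v_i^{\prime}\!\left(J_i(x_i,\x_{-i},\p,\xi_k)\right)\frac{\partial J_i}{\partial x_i}(x_i,\x_{-i},\p,\xi_k),$$
and from the definitions of $J_i$, $U$ and $C_i$ one computes $\frac{\partial J_i}{\partial x_i}(x_i,\x_{-i},\p,\xi_k)=-\frac{2a_i}{N^2}\big(\sum_j x_j-\sum_j y_j\big)-(p_i-\xi_k)$. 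The one structural fact that makes the argument go through is that every term of this derivative is independent of the realization index $k$ except for the $\xi_k$ coming from the cost $C_i$. So, abbreviating the $k$-independent part as $A_i:=-\frac{2a_i}{N^2}\big(\sum_j x_j-\sum_j y_j\big)-p_i$, the $i$-th stationarity condition reads $A_i\sum_{k}q_k v_i^{\prime}(J_i(\cdot,\xi_k))+\sum_{k}q_k v_i^{\prime}(J_i(\cdot,\xi_k))\,\xi_k=0$.

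Next I would divide through by the normalizing constant $\sum_{k}q_k v_i^{\prime}(J_i(\cdot,\xi_k))$, which is strictly positive because $v_i$ is increasing (so $v_i^{\prime}>0$) and $q$ is a probability vector; by the definition~\eqref{eq:transf-prob-dist} of $\tilde\xi^{\x,\p}(i)$ the surviving sum is exactly $\mathbb{E}[\tilde\xi^{\x,\p}(i)]$. Thus the $i$-th first-order condition becomes $-\frac{2a_i}{N^2}\big(\sum_j x_j-\sum_j y_j\big)-p_i+\mathbb{E}[\tilde\xi^{\x,\p}(i)]=0$, i.e. $\sum_j x_j-\sum_j y_j=-\frac{N^2}{2}\cdot\frac{p_i-\mathbb{E}[\tilde\xi^{\x,\p}(i)]}{a_i}$ for each $i$. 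Since the left-hand side does not depend on $i$, these $N$ equations are simultaneously satisfiable exactly when the quantities $\frac{p_i-\mathbb{E}[\tilde\xi^{\x,\p}(i)]}{a_i}$ share a common value $\kappa_{\p,\x}$ and $\sum_j x_j-\sum_j y_j=-\frac{N^2}{2}\kappa_{\p,\x}$, which is precisely~\eqref{eqn:constr-coupl-PT}--\eqref{eq:defn-kappa-p-x}; both directions follow since every manipulation above is an equivalence.

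I do not expect a genuine obstacle here: the computation is routine, and the only points requiring care are the strict positivity of the weighting denominator (needed in order to divide) and the bookkeeping that, in contrast with the EUT case, both $\kappa_{\p,\x}$ and $\mathbb{E}[\tilde\xi^{\x,\p}(i)]$ depend on $\x$ — so the result is an implicit, fixed-point characterization of $\widetilde X^*(\p)$ rather than a closed-form description, which is exactly why the subsequent analysis of the PT game is harder than for $\mathcal G_{\text{EUT}}$.
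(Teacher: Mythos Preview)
Your proposal is correct and matches the paper's own argument, which is given in the discussion immediately preceding the lemma rather than as a formal proof: the paper observes that concavity of $\widetilde J_i$ in $x_i$ makes the first-order conditions necessary and sufficient, rewrites $\frac{\partial \widetilde J_i}{\partial x_i}=0$ as the fixed-point equation $\x=G_{\p}(\x)$ using precisely the chain-rule/normalization computation you outline, and the lemma is then just a restatement of this system in the coupled form~\eqref{eqn:constr-coupl-PT}--\eqref{eq:defn-kappa-p-x}. The only microscopic caveat is that ``$v_i$ increasing'' strictly gives $v_i'\ge 0$, not $v_i'>0$; strict positivity of the normalizing denominator is an implicit standing assumption of the paper (it is needed already for the distribution~\eqref{eq:transf-prob-dist} to be well defined), so this is not a gap in your argument.
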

\textit{Existence and Uniqueness:} Restricting strategies to compact sets, we can obtain existence results for the fixed point map in a projected form, using a direct application of Brouwer's theorem~\cite{rudinBaby}.
\begin{lem}
    If $\mathcal X_i,i=1,...,N$ are  convex and compact, there exists a fixed point for the equation
    $$
        \x=\mathrm{proj}_{\mathcal X}[G_{\p}(\x)].
   $$
\end{lem}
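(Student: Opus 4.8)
The plan is to reduce the claim to Brouwer's fixed point theorem, so essentially all the work lies in checking that $T_{\p}(\x) := \mathrm{proj}_{\mathcal X}[G_{\p}(\x)]$ is a continuous self-map of a nonempty, compact, convex set. First I would observe that $\mathcal X = \prod_{i=1}^N\mathcal X_i$ is nonempty, compact and convex, being a finite Cartesian product of such sets, and that because $\mathcal X$ is closed and convex the metric projection $\mathrm{proj}_{\mathcal X}$ is well defined, single valued, nonexpansive (hence continuous) and takes values in $\mathcal X$. Thus $T_{\p}$ automatically maps $\mathcal X$ into itself, and the only thing left is the continuity of $\x\mapsto G_{\p}(\x)$ on $\mathcal X$.

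Next I would isolate the one genuinely non-affine ingredient of $G_{\p,i}$, namely its dependence on $\mathbb E[\tilde\xi^{\x,\p}(i)]$ through the distorted distribution in \eqref{eq:transf-prob-dist} (all other terms in $G_{\p,i}(\x)$ are affine in $\x$, hence continuous). Since $v_i$ is concave and differentiable on $\mathbb R$, its derivative $v_i'$ is non-increasing and has the intermediate value property, hence is continuous; and since $v_i$ is increasing, $v_i'>0$ everywhere. The outcome $J_i(x_i,\x_{-i},\p,\xi_k)$ is polynomial, hence continuous in $\x$, so for each $k$ the map $\x\mapsto v_i'\big(J_i(x_i,\x_{-i},\p,\xi_k)\big)$ is continuous and strictly positive. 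On the compact set $\mathcal X$ it therefore attains a strictly positive minimum, so the denominator $\sum_{j=1}^M q_j v_i'\big(J_i(x_i,\x_{-i},\p,\xi_j)\big)$ in \eqref{eq:transf-prob-dist} is continuous and bounded below by a positive constant (at least one $q_j>0$ since $q$ is a probability vector). Each probability $\mathbb P[\tilde\xi^{\x,\p}(i)=\xi_k]$ is then a ratio of continuous functions with non-vanishing denominator, hence continuous in $\x$, and $\mathbb E[\tilde\xi^{\x,\p}(i)]=\sum_{k=1}^M\xi_k\,\mathbb P[\tilde\xi^{\x,\p}(i)=\xi_k]$ is a finite sum of continuous functions, hence continuous.

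Combining these observations, $G_{\p}$ is continuous on $\mathcal X$, so $T_{\p}=\mathrm{proj}_{\mathcal X}\circ G_{\p}$ is a continuous map of the nonempty compact convex set $\mathcal X\subset\mathbb R^N$ into itself, and Brouwer's theorem supplies a point $\x$ with $\x=\mathrm{proj}_{\mathcal X}[G_{\p}(\x)]$. The step I expect to be the crux is the uniform lower bound on the denominator in \eqref{eq:transf-prob-dist}: compactness of $\mathcal X$ together with strict positivity of $v_i'$ is exactly what prevents the distorted distribution from degenerating and guarantees continuity of the induced mean shift; everything else is routine bookkeeping about products of compact convex sets and standard properties of the metric projection.
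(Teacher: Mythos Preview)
Your argument is correct and follows exactly the route the paper indicates: the paper simply states that the lemma is ``a direct application of Brouwer's theorem'' without spelling anything out, and you have supplied precisely the missing verification that $T_{\p}=\mathrm{proj}_{\mathcal X}\circ G_{\p}$ is a continuous self-map of the compact convex set $\mathcal X$. The only minor wrinkle is the justification of $v_i'>0$: monotonicity alone does not give strict positivity of the derivative, but combined with concavity (which forces $v_i'$ to be non-increasing) and strict increase of $v_i$ it does, so your conclusion stands.
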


It is difficult to show uniqueness of equilibria for a general form of the PT game. One naturally expects it to display multiple equilibria. The following theorem shows that for a specific instance, where the players are symmetric and the functions $v_i$ are identical and exponential, and the vector $\p$ is restricted to be identical across players, we obtain unique equilibria at each $\p$, as well as a structural result.
\begin{thm}\label{thm:uniq-PT-sym}
    Let the players be symmetric, i.e., $a_i=a$ for all $i$, and let $\mathcal P=\{p\textbf{1}:p\in\mathbb{R}\}$.
    For all $i$, let  $v_i(z)=1-\exp(-\lambda z)$, for some $\lambda>0$. Then $|\widetilde X^*(\p)|=1$  and  $\x^*\in X^*(\p)$ satisfies $x^*_i-y_i=x^*_1-y_1$ for all players $i$.
\end{thm}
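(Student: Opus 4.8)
The plan is to exploit the symmetry heavily and reduce the fixed-point system to a single scalar equation in the aggregate. First I would use Lemma~\ref{lem:charac-eq-PT}: a joint strategy $\x$ is a PT-Nash equilibrium at $\p=p\textbf{1}$ if and only if there is a common value $\kappa_{\p,\x}$ with $\sum_j x_j-\sum_j y_j=-\frac{N^2}{2}\kappa_{\p,\x}$ and $\kappa_{\p,\x}=(p-\mathbb{E}[\tilde\xi^{\x,\p}(i)])/a$ for every $i$. Writing $S=\sum_j x_j$ and noting $U(\x)$ depends on $\x$ only through $S$, observe that $J_i(x_i,\x_{-i},p\textbf{1},\xi_k)=-a\,U(S)+b_i-p x_i-(y_i-x_i)\xi_k$, so the distribution~\eqref{eq:transf-prob-dist} of $\tilde\xi^{\x,\p}(i)$ depends only on the quantities that multiply $\xi_k$ inside $v_i'$, namely on $x_i$ (through $(x_i-y_i)$) and on the constant $-aU(S)+b_i-p x_i$. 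The crucial simplification comes from the exponential form $v_i'(z)=\lambda\exp(-\lambda z)$: then $v_i'(J_i)=\lambda\exp(-\lambda(-aU(S)+b_i-p x_i))\exp(-\lambda(y_i-x_i)\xi_k)$, and the $\x$- and $i$-dependent prefactor cancels between numerator and denominator of~\eqref{eq:transf-prob-dist}. Hence
\[
\mathbb{P}[\tilde\xi^{\x,\p}(i)=\xi_k]=\frac{q_k e^{-\lambda(y_i-x_i)\xi_k}}{\sum_{j=1}^M q_j e^{-\lambda(y_i-x_i)\xi_j}},
\]
so $\mathbb{E}[\tilde\xi^{\x,\p}(i)]$ depends on $\x$ and $i$ \emph{only} through the single scalar $t_i:=x_i-y_i$. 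Define $m(t):=\mathbb{E}[\tilde\xi]$ under the tilted law with parameter $t$, i.e. $m(t)=\big(\sum_k q_k\xi_k e^{\lambda t\xi_k}\big)/\big(\sum_k q_k e^{\lambda t\xi_k}\big)$.

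Next I would show $m$ is strictly increasing. This is the standard fact that the tilted mean of a non-degenerate distribution is strictly increasing in the tilt: $\frac{d}{dt}m(t)=\lambda\,\mathrm{Var}_{t}[\xi]>0$ since $\xi$ has positive variance (and tilting preserves non-degeneracy). With $S$ fixed, condition~\eqref{eq:defn-kappa-p-x} becomes $p-m(t_i)=p-m(t_1)$ for all $i$, i.e. $m(t_i)=m(t_1)$, and strict monotonicity of $m$ forces $t_i=t_1$ for all $i$, that is $x_i-y_i=x_1-y_1$. This already proves the asserted structural property for every equilibrium, \emph{conditional on existence}; combined with the aggregate constraint it gives $x_i-y_i=-\frac{N}{2}\kappa$ for the common $\kappa$, exactly mirroring the selected EUT equilibrium~\eqref{eq:selected-eq}.

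It remains to prove $|\widetilde X^*(\p)|=1$. By the above, any equilibrium is determined by a single number $t$ with $x_i=y_i+t$, so $S=\sum_j y_j+Nt$, $U(S)=t^2$ plays no role in the tilted law, and the equilibrium condition collapses to the scalar equation $p-m(t)=a\kappa$ together with $Nt=-\frac{N^2}{2}\kappa$, i.e. $\kappa=-\frac{2t}{N}$; substituting gives the single equation
\[
F(t):=p-m(t)+\frac{2a}{N}t=0.
\]
Since $m$ is strictly increasing, $-m(t)$ is strictly decreasing, but the linear term $\frac{2a}{N}t$ is strictly increasing, so monotonicity of $F$ is not immediate — this is the main obstacle. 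I would resolve it by bounding the derivative of $m$: $m'(t)=\lambda\,\mathrm{Var}_t[\xi]\le \lambda\,\frac{(\xi_M-\xi_1)^2}{4}$ by Popoviciu's inequality, which is a fixed finite constant, whereas existence (Brouwer) already guarantees at least one root on the relevant compact interval; uniqueness on that interval then follows if $\frac{2a}{N}-m'(t)$ has constant sign there, or, failing a clean global bound, from a more careful argument that $F$ crosses zero exactly once because $m$ is an increasing contraction-type map against the coercive linear term — in the worst case I would invoke strict concavity/convexity structure of $\log\sum_k q_k e^{\lambda t\xi_k}$ (the cumulant generating function), whose second derivative is exactly $\lambda^{-1}m'(t)$ and whose third derivative controls how $m'$ varies, to pin down that $F$ is eventually monotone and has a unique zero. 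Modulo this scalar-analysis step, the proof is complete: existence from the projected-Brouwer lemma, the structural identity and reduction to $F(t)=0$ from symmetry and strict monotonicity of the tilted mean, and uniqueness from the sign analysis of $F'$.
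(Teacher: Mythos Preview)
Your overall strategy coincides with the paper's: invoke Lemma~\ref{lem:charac-eq-PT}, use the exponential form of $v_i'$ to make the tilted law depend only on $y_i-x_i$, exploit strict monotonicity of the tilted mean to force $x_i-y_i=x_1-y_1$, and reduce to a single scalar equation. The structural part of your argument is fine.

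The gap is a sign slip in the tilted law, and it is exactly what manufactures your ``main obstacle''. From $J_i=-aU(\x)+b_i-p x_i-(y_i-x_i)\xi_k$ and $v_i'(z)=\lambda e^{-\lambda z}$ you get
\[
v_i'(J_i)=\lambda\,e^{-\lambda(-aU(\x)+b_i-p x_i)}\cdot e^{+\lambda(y_i-x_i)\xi_k},
\]
so the tilt parameter is $+\lambda(y_i-x_i)$, not $-\lambda(y_i-x_i)$. With the correct sign, writing $\beta:=y_i-x_i$ (rather than $t_i=x_i-y_i$) and $F(\beta):=\dfrac{\sum_k q_k\xi_k e^{\lambda\beta\xi_k}}{\sum_k q_k e^{\lambda\beta\xi_k}}$, the aggregate condition in Lemma~\ref{lem:charac-eq-PT} becomes the scalar equation
\[
p-\frac{2a}{N}\,\beta \;=\; F(\beta).
\]
Now the left side is strictly \emph{decreasing} in $\beta$ while the right side is strictly \emph{increasing} (your own variance computation), so existence and uniqueness of the crossing are immediate; moreover $F$ is bounded between $\min_k\xi_k$ and $\max_k\xi_k$ while the line ranges over all of $\mathbb{R}$, so a root certainly exists without appealing to Brouwer. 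This is precisely the paper's argument. Your Popoviciu bound, cumulant-generating-function considerations, and the hedged ``modulo this scalar-analysis step'' are all artifacts of the sign error: in your equation $m(t)=p+\tfrac{2a}{N}t$ both sides are increasing, whereas the true equation pits a decreasing against an increasing function and closes in one line.
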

\begin{proof}
    Let $\p:=p\textbf{1}$. 
    Define
    $$
        F(\Delta)\triangleq\frac{\sum_{k=1}^Mq_k\xi_k\exp(\lambda\Delta\xi_k)}{\sum_{k=1}^Mq_k\exp(\lambda\Delta\xi_k)}.
    $$
    We have
    \begin{align*}
        \frac{\partial F(\Delta)}{\partial\Delta}&=\lambda\frac{\sum_{k=1}^Mq_k\exp(\lambda\Delta\xi_k)\sum_{k=1}^Mq_k\xi_k^2\exp(\lambda\Delta\xi_k)-\left(\sum_{k=1}^Mq_k\xi_k\exp(\lambda\Delta\xi_k)\right)^2}{\left(\sum_{k=1}^Mq_k\exp(\lambda\Delta\xi_k)\right)^2},\\
        &=\lambda\left[\frac{\sum_{k=1}^Mq_k\xi_k^2\exp(\lambda\Delta\xi_k)}{\sum_{k=1}^Mq_k\exp(\lambda\Delta\xi_k)}-\left(\frac{\sum_{k=1}^Mq_k\xi_k\exp(\lambda\Delta\xi_k)}{\sum_{k=1}^Mq_k\exp(\lambda\Delta\xi_k)}\right)^2\right],\\
        &=\lambda \left[\mathbb{E}_{q^0}[\xi^2]-(\mathbb{E}_{q^0}[\xi])^2\right],
    \end{align*}
    where $q^0$ is the distribution $(q^0(1),...,q^0(N))$ over $\Xi$, with
    $$
    q^0(i)=\frac{q_i\exp(\lambda\Delta\xi_i)}{\sum_{k=1}^Mq_k\exp(\lambda\Delta\xi_k)}.
    $$
    Since $\xi$ had positive variance under the distribution $q$, it follows that $[\mathbb{E}_{q^0}[\xi^2]-(\mathbb{E}_{q^0}[\xi])^2]>0$, and hence $F(\Delta)$ is strictly increasing in $\Delta$. 
    
    Note that, since $a_i=a$ and $p_i=p$, \eqref{eq:defn-kappa-p-x} is equivalent to
$$
        \mathbb{E}[\tilde\xi^{\x,\p}(1)]=\mathbb{E}[\tilde\xi^{\x,\p}(i)]~\forall ~i.
    $$
    Observe that $\mathbb{E}[\tilde\xi^{\x,\p}(i)]=F(y_i-x_i)$, and since $F(\Delta)$ is strictly increasing in $\Delta$, it follows that \eqref{eq:defn-kappa-p-x} is equivalent to $$
        y_1-x_1=y_j-x_j~\forall~j.$$

    Thus Lemma \ref{lem:charac-eq-PT} is now equivalent to the statement that, there exists a Nash equilibrium at $\p$ (i.e., a solution to the equation $\x=G_{\p}(\x)$) if and only if there exists
     $\beta$ such that
     
   \begin{align}\label{eqn:fix-poi-beta}
        p-\frac{2a\beta}{N}=F(\beta),
    \end{align}
    where $$\beta=y_j-x_j~\forall j.$$
    Since $p-\frac{2a\beta}{N}$ is strictly decreasing and $F(\beta)$ is strictly increasing in $\beta$, there is exactly one solution $\beta^*$ to \eqref{eqn:fix-poi-beta}. Hence we obtain the unique equilibrium at $\p$,
    $$\x^*(\p)=\y-\beta^*.$$
\end{proof}
The exponential $v$ models a uniformly risk-averse player. While in general PT transforms are not uniformly risk averse, the theorem suggests that PT transforms that are partially of the above form, may exhibit unique equilibria for some parameters.

\textit{Convergence to Equilibria:} As in the EUT case, we use Algorithm \ref{alg:con-alg-grad} in the PT case as well. The only difference is that $\Delta_i$ is replaced by
$$\tilde\Delta_i:=\frac{\partial}{\partial x_i}\widetilde J_i(x_i,\x_{-i},\p).$$
The PT game is only a concave game. In the absence of potential or monotone structure, it is not easy to establish global convergence of gradient dynamics.
However, there are conditions for local convergence to Nash equilibria~\cite{tan2023necessary}. Not all PT transformations satisfy these conditions. In Section \ref{sec:numerical-study}, we present an example as Case 1, which satisfies these conditions for local convergence. Global convergence  for more general forms of the PT game  will be taken up in a subsequent work.

 \subsection*{Preservation and Vanishing of EUT Equilibria}
 The EUT to PT transformation behaves differently with different Nash equilibria of the EUT setting. We will call an equilibrium $\x^*$ in $X^*(\p)~$\emph{symmetric} for a player $i$ if the PT transformation affects all the outcomes of that equilibrium symmetrically, i.e.,
 $$v_i(\overline J_i(x^*_i,x^*_{-i},\p,\xi_j))=v_i(\overline J_i(x^*_i,x^*_{-i},,\p,\xi_k))~\forall~ j,k.$$
 Thus, the equilibrium $\y$ is symmetric for all players at price $\overline\xi\textbf{1}$. Similarly, for $\p\ne \overline\xi\textbf{1}$, any $\x^*\in X^*(\p)$ for which $x_j^*=y_j^*$ is symmetric for player $j$. Any equilibrium which is not symmetric for player $i$, will be called \emph{asymmetric} for player $i$. We present two results, showing how a symmetric equilibrium is preserved, while asymmetric equilibria are not.
\begin{prop}\label{prop:equilibrium-PT}
    We have $\y\in\widetilde X^*(\overline\xi\textbf{1})$.
 \end{prop}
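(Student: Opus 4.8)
The plan is to verify directly that the profile $\x = \y$ satisfies the characterization of PT--Nash equilibria from Lemma~\ref{lem:charac-eq-PT} at the price $\p = \overline\xi\textbf{1}$. At $\x = \y$ we have $\sum_{j=1}^N x_j - \sum_{j=1}^N y_j = 0$, so the coupling constraint~\eqref{eqn:constr-coupl-PT} forces the candidate multiplier to be $\kappa_{\p,\x} = 0$. Hence it suffices to show that $\mathbb{E}[\tilde\xi^{\y,\overline\xi\textbf{1}}(i)] = \overline\xi$ for every $i$, which is exactly condition~\eqref{eq:defn-kappa-p-x} with $p_i = \overline\xi$ and $\kappa_{\p,\x} = 0$.

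The crucial observation is that at the profile $\x = \y$ and price $\p = \overline\xi\textbf{1}$ the outcome $J_i$ does not depend on the realization of $\xi$. Indeed, $U(\y) = \left(\frac1N\sum_j y_j - \frac1N\sum_j y_j\right)^2 = 0$, and $C_i(\overline\xi, y_i, \xi_k) = \overline\xi\, y_i + (y_i - y_i)\xi_k = \overline\xi\, y_i$, so $J_i(y_i, \y_{-i}, \overline\xi\textbf{1}, \xi_k) = b_i - \overline\xi\, y_i$ for every $k = 1,\dots,M$. In the terminology introduced just before the proposition, $\y$ is a symmetric equilibrium for every player at the price $\overline\xi\textbf{1}$.

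It then follows that $v_i'$ is evaluated at the single point $b_i - \overline\xi\, y_i$ for all $k$, so this common factor cancels between numerator and denominator of~\eqref{eq:transf-prob-dist}, giving $\mathbb{P}[\tilde\xi^{\y,\overline\xi\textbf{1}}(i) = \xi_k] = q_k$ for all $k$. Thus $\tilde\xi^{\y,\overline\xi\textbf{1}}(i)$ has the same law as $\xi$ under $q$, whence $\mathbb{E}[\tilde\xi^{\y,\overline\xi\textbf{1}}(i)] = \overline\xi$. Both defining conditions of Lemma~\ref{lem:charac-eq-PT} then hold with $\kappa_{\p,\x} = 0$, and therefore $\y \in \widetilde X^*(\overline\xi\textbf{1})$.

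I do not expect a genuine obstacle: the statement rests entirely on the fact that the probability-weighting distortion (equivalently, the reweighting by $v_i'$) acts trivially whenever the outcome is noise-free, and at $\x = \y$, $\p = \overline\xi\textbf{1}$ both the aggregative usage term $U(\y)$ and the noise term $(y_i - x_i)\xi$ in $C_i$ vanish. The only point requiring a little care is to note that $\kappa_{\p,\x} = 0$ is \emph{forced} by~\eqref{eqn:constr-coupl-PT} before invoking~\eqref{eq:defn-kappa-p-x}, so that what is being checked is a true equilibrium condition rather than a vacuous consistency relation.
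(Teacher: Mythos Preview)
Your proof is correct and rests on the same key observation as the paper's: at $\x=\y$ and $\p=\overline\xi\mathbf{1}$ the outcome $J_i$ is independent of $\xi$, so the prospect-theoretic reweighting is trivial. The only difference is packaging---you route through Lemma~\ref{lem:charac-eq-PT}, whereas the paper computes $\partial_{x_i}\widetilde J_i(y_i,\y_{-i},\overline\xi\mathbf{1})=v_i'(b_i-\overline\xi y_i)\sum_j q_j(\xi_j-\overline\xi)=0$ directly and then invokes concavity of $\widetilde J_i$; the underlying argument is the same.
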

 \begin{proof}
      Let $v_i^{\prime}$  denote the first derivative of $v_i$. 
     Using  \ref{eqn:simplified-x-y-p} with $\p=\overline\xi\textbf{1}$ and $\x=\y$, this evaluates to
     \begin{align}\label{eq:deriv-for-pt-0}
        \frac{\partial}{\partial x_i}\widetilde J_i(y_i,\y_{-i},\overline\xi\textbf{1}) &= v_i^{\prime}(b_i-{\overline\xi}y_i)\sum_{j=1}^M q_j[-{\overline\xi}+\xi_j]=0.
     \end{align}
     Since $v_i$ is non decreasing and concave and $J_i$ is concave, it follows~\cite{boyd2004convex} that $\tilde u(\cdot)=\sum_jq_j[v_i\circ J_i(\cdot,\xi_j)]$ is concave. Then \eqref{eq:deriv-for-pt-0} implies \eqref{def:Nash-eqbm-PT}.
 \end{proof}
 Thus $\y$, which is a symmetric Nash equilibrium of $\mathcal G_{\text{EUT}}$ is also a Nash equilibrium of $\mathcal G_{\text{PT}}$, when $\p=\overline\xi\textbf{1}$. 

\begin{prop}\label{prop:equilibrium-destr}
    If some $v_k$ is strictly concave, then any element of  $X^*(\p)$ which is asymmetric for player $k$, will not belong to $\widetilde X^*(\p)$. 
 \end{prop}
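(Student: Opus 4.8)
The plan is to show that for any $\x^*\in X^*(\p)$ which is asymmetric for player $k$, the first-order stationarity condition for player $k$ in the PT game fails at $\x^*$, so that by Lemma~\ref{lem:charac-eq-PT} one has $\x^*\notin\widetilde X^*(\p)$. Note first that the existence of such an $\x^*$ makes $X^*(\p)$ nonempty, so $\p\in\mathcal P_1$ by Lemma~\ref{lem:Nash-eq-set-EUT}. The first step is structural: in \eqref{eq:defn-pre-utility} the noise enters $J_k$ only through the cost term $(x_k-y_k)\xi$, so evaluated at the fixed profile $\x^*$ the outcomes across the states of $\xi$ are affine in the noise, $J_k(x_k^*,\x^*_{-k},\p,\xi_j)=D+c\,\xi_j$, where $c:=x_k^*-y_k$ and $D$ gathers the $\xi$-independent terms. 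Since $\xi$ has positive variance, at least two of the $\xi_j$ with positive probability differ; hence $c=0$ would make all these outcomes equal and $\x^*$ symmetric for $k$. Contrapositively, asymmetry of $\x^*$ for player $k$ forces $c\neq 0$.

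The second step computes $\partial\widetilde J_k/\partial x_k$ at $\x^*$. Differentiating $\widetilde J_k=\sum_{j}q_j\,v_k(J_k(\cdot,\xi_j))$ gives $\sum_j q_j\,v_k'(D+c\xi_j)\bigl(-\tfrac{2a_k}{N^2}(\sum_\ell x_\ell-\sum_\ell y_\ell)-p_k+\xi_j\bigr)$. At $\x^*\in X^*(\p)$ we have $\sum_\ell x_\ell^*-\sum_\ell y_\ell^*=-\tfrac{N^2}{2}\kappa_{\p}$ by Lemma~\ref{lem:Nash-eq-set-EUT}, and $\p\in\mathcal P_1$ gives $p_k-\overline\xi=a_k\kappa_{\p}$; substituting, the bracketed deterministic part collapses to $-\overline\xi$, so
\[
\frac{\partial\widetilde J_k}{\partial x_k}\Big|_{\x^*}=\sum_{j=1}^M q_j\,v_k'(D+c\xi_j)\,(\xi_j-\overline\xi)=\mathrm{Cov}_q\!\bigl(v_k'(D+c\xi),\,\xi\bigr),
\]
where the last equality uses $\sum_j q_j(\xi_j-\overline\xi)=0$.

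The third step shows this quantity is nonzero. Since $v_k$ is \emph{strictly} concave and differentiable, $v_k'$ is strictly decreasing; with $c\neq 0$ this makes $\xi\mapsto v_k'(D+c\xi)$ strictly monotone and non-constant on the support of $q$, so its covariance with $\xi$ is strictly nonzero, of sign $-\mathrm{sgn}(c)$. (A clean way to see this is $\mathrm{Cov}_q(h(\xi),\xi)=\tfrac12\,\mathbb{E}_{q\otimes q}\bigl[(h(\xi)-h(\xi'))(\xi-\xi')\bigr]$ for an independent copy $\xi'$, the integrand having a constant sign and being nonzero with positive probability.) Hence $\partial\widetilde J_k/\partial x_k|_{\x^*}\neq 0$, i.e.\ the PT stationarity condition for player $k$ fails at $\x^*$. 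Equivalently, in the notation of Lemma~\ref{lem:charac-eq-PT}: if $\x^*$ were in $\widetilde X^*(\p)$, then \eqref{eqn:constr-coupl-PT} together with $\x^*\in X^*(\p)$ would force $\kappa_{\p,\x^*}=\kappa_{\p}$ and hence $\mathbb{E}[\tilde\xi^{\x^*,\p}(k)]=\overline\xi$ by \eqref{eq:defn-kappa-p-x}; but $\mathbb{E}[\tilde\xi^{\x^*,\p}(k)]-\overline\xi$ equals $\mathrm{Cov}_q(v_k'(D+c\xi),\xi)$ divided by the positive normalizer $\sum_j q_j v_k'(D+c\xi_j)$, which is nonzero — a contradiction. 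Therefore $\x^*\notin\widetilde X^*(\p)$.

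I do not expect a genuine obstacle. The only points to handle with a little care are (i) the equivalence ``asymmetric for $k$ $\Leftrightarrow$ $c\neq 0$'' — only the direction $c\ne 0$ is used, and it is immediate, though the converse would use that a non-decreasing strictly concave function is injective — and (ii) the bookkeeping that reduces $\partial_{x_k}J_k$ at $\x^*$ to $\xi_j-\overline\xi$ using membership in $X^*(\p)$ and in $\mathcal P_1$; the rest reduces to the elementary fact that the covariance of a strictly monotone function of $\xi$ with $\xi$ is nonzero.
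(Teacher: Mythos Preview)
Your proof is correct and follows essentially the same route as the paper: reduce $\partial_{x_k}\widetilde J_k$ at an EUT equilibrium to $\sum_j q_j(\xi_j-\overline\xi)\,v_k'(\text{affine in }\xi_j)$ using the identities from Lemma~\ref{lem:Nash-eq-set-EUT} and $\p\in\mathcal P_1$, then invoke strict concavity of $v_k$ together with $x_k^*\neq y_k$ to conclude this quantity is nonzero. Your covariance formulation and the independent-copy identity are simply a tidy repackaging of the paper's $\epsilon(z)$ decomposition, and your explicit justification that asymmetry forces $c\neq 0$ (via positive variance of $\xi$) makes precise what the paper asserts as ``clearly.''
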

 \begin{proof}
     Let $\p\in\mathcal P_1$. Let $\x^*\in \hat X^*(\p)$ be asymmetric for player $k$. Clearly, then $x^*_k\ne y_k$. Define $\Delta_k=y_k- x_k^*$. We present the proof for the case $\Delta_k>0$ (For $\Delta_k<0$, the proof is the similar with appropriate sign changes).  For $\p\in\mathcal P_1$, and at $\x^*$, we have
 \begin{align*}
     \frac{\partial}{\partial x_k}\widetilde J_k(x_k^*,\x_{-k}^*,\p)=\sum_{j=1}^M q_j(\xi_j-\overline\xi)v_k^{\prime}(\tilde b_k-(\xi_j-\overline\xi)\Delta_k),
 \end{align*}
where 
$$\tilde b_k=b_k-y_k\overline\xi+x_k(\overline\xi-p_k)-\frac{N^2}{4a_i}(p_i-\overline\xi)^2.$$ 
Since $v_k(z)$ is strictly concave, $v_k^{\prime}(z)$ is strictly decreasing in $z$. Thus,
\begin{align}
    v^{\prime}_k(\tilde b_k-(z-\overline\xi)\Delta_k)=v^{\prime}_k(\tilde b_k)+\epsilon(z),
\end{align}
where $\epsilon(z)$ satisfies,
\begin{align}\label{eq:epsilon-zed}
    \epsilon(z) \begin{cases}
        <0, &z<\overline\xi,\\
        >0, &z>\overline\xi.
    \end{cases}
\end{align}
Thus we have
\begin{align*}
    \frac{\partial}{\partial x_k}\widetilde J_k(x_k^*,\x_{-k}^*,\p) &=\sum_{j=1}^M q_j(\xi_j-\overline\xi)[v_k^{\prime}(\tilde b_k)+\epsilon(\xi_j)]>0,
\end{align*}
where the inequality follows by noting that $\overline\xi$ is the mean of $\xi$, and \eqref{eq:epsilon-zed}.
Thus $\x^*$ is not a Nash equilibrium.
 \end{proof}
  Thus when all  players have strictly concave PT transformations $v_i$, none of the equilibria which are asymmetric for any player, are preserved.
\begin{cor}
    If the $v_i$ are strictly concave for all $i$,then, for $\p\in\mathcal P_1$,
    $$\widetilde X^*(\p)\cap X^*(\p)=
            \{\y\}\mathds{1}_{\{\p=\overline\xi\textbf{1}\}}.$$
\end{cor}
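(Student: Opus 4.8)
The plan is to read off the corollary from Propositions~\ref{prop:equilibrium-PT} and~\ref{prop:equilibrium-destr} together with the explicit description of $X^*(\p)$ in Lemma~\ref{lem:Nash-eq-set-EUT}. Fix $\p\in\mathcal P_1$ and let $\x^*\in\widetilde X^*(\p)\cap X^*(\p)$. Since every $v_i$ is strictly concave, Proposition~\ref{prop:equilibrium-destr} excludes from $\widetilde X^*(\p)$ every element of $X^*(\p)$ that is asymmetric for some player; hence $\x^*$ must be symmetric for all players simultaneously. So the intersection is contained in the set of elements of $X^*(\p)$ that are symmetric for every player, and it suffices to show this set equals $\{\y\}$ when $\p=\overline\xi\textbf{1}$ and is empty otherwise.

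Next I would pin down the symmetry condition. A differentiable $v_i$ that is increasing and strictly concave has a strictly decreasing, nonnegative derivative, hence a strictly positive derivative on the interior of its domain, so $v_i$ is strictly increasing and in particular injective. Thus $\x^*$ symmetric for player $i$ means $J_i(x_i^*,\x^*_{-i},\p,\xi_j)$ is independent of $j$; since the only $\xi$-dependence in $J_i$ enters through the term $(x_i-y_i)\xi$ and $\xi$ has positive variance (its support contains at least two distinct values), this forces $x_i^*=y_i$. Symmetry for all players therefore gives $\x^*=\y$.

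It remains to invoke Lemma~\ref{lem:Nash-eq-set-EUT}: $\y\in X^*(\p)$ iff $\sum_j y_j=\sum_j y_j-\tfrac{N^2}{2}\kappa_{\p}$, i.e.\ $\kappa_{\p}=0$, i.e.\ $p_1=\overline\xi$, which combined with $\p\in\mathcal P_1$ means $\p=\overline\xi\textbf{1}$. Hence for $\p\in\mathcal P_1$ with $\p\neq\overline\xi\textbf{1}$ the intersection is empty, while for $\p=\overline\xi\textbf{1}$ we have $\y\in X^*(\overline\xi\textbf{1})$ trivially and $\y\in\widetilde X^*(\overline\xi\textbf{1})$ by Proposition~\ref{prop:equilibrium-PT}, so the intersection is exactly $\{\y\}$; this is $\{\y\}\mathds{1}_{\{\p=\overline\xi\textbf{1}\}}$. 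The only step that needs an argument beyond citing the two propositions is the implication ``symmetric for player $i$'' $\Rightarrow$ $x_i^*=y_i$, and that is the place to be careful, but it follows immediately once strict monotonicity of $v_i$ and non-degeneracy of $\xi$ are noted; I do not anticipate a genuine obstacle.
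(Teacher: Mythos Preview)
Your proof is correct and follows the route the paper clearly intends: the corollary is stated without proof immediately after Propositions~\ref{prop:equilibrium-PT} and~\ref{prop:equilibrium-destr}, and you have supplied exactly the missing link, namely that ``symmetric for player $i$'' forces $x_i^*=y_i$ (via injectivity of $v_i$), after which Lemma~\ref{lem:Nash-eq-set-EUT} finishes the case split on $\p$. One small technical nit: strict concavity together with differentiability does not by itself guarantee a \emph{strictly} decreasing derivative, but you do not need that; strict concavity plus nondecreasing already forces $v_i$ to be strictly increasing (a constant stretch would violate strict concavity), hence injective, which is all the argument uses.
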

 \begin{figure}
     \centering
     \begin{tikzpicture}[scale=1.2][domain=0:3.5]

  \draw[->] (-0.2,0) -- (4.2,0) node[right] {$x_1$};
  \draw[->] (0,-0.2) -- (0,4.2) node[above] {$x_2$};

  \draw[color=red]    (0,3) -- (3,0);
  \draw[color=blue]    (0,4) -- (4,0);
  \node at (3,-0.3) {($y_1+y_2$,0)};
  \node at (-0.9,3) {(0,$y_1+y_2$)};
  \node at (1,-0.3) {($y_1$,0)};
  \node at (-0.5,2) {(0,$y_2$)};
  \draw[dashed] (1,0)--(1,3);
  \draw[dashed] (0,2)--(2,2);
  \node at (3.3,1.5) {\textcolor{blue}{$ X^*(\p)$}};
  \node at (1.8,0.5) {\textcolor{red}{$ X^*(\overline\xi\textbf{1})$}};
  \node[circle,fill=black!20,inner sep=-2]  at (2,2) {}; 
  \node[circle,fill=black!20,inner sep=-2]  at (1,3) {}; 
  \node[circle,fill=black!20,inner sep=-2]  at (1,2) {}; 
  \node at (2.5,2) {$\x(2)$};
  \node at (1.5,3) {$\x(1)$};
  \node at (0.8,1.8) {$\x$};
\end{tikzpicture}
     \caption{Vanishing and preservation of Nash equilibria}
     \label{fig:destr-nash-eq}
 \end{figure}
 
 The preservation of symmetric equilibria and vanishing of asymmetric equilibria are illustrated below via an example.
 \begin{exmp}
     Consider $N$ players. All have linear $v_i$ except player $k$, with a strictly concave $v_k$. All $\x\in X^*(\p)$ which satisfy $x^*_k=y_k$, are elements of $\widetilde X^*(\p)$ and $\x\in X^*(\p)$ with $x^*_k\ne y_k$ are not in $\widetilde X^*(\p)$. We illustrate this for the case $N=2$ in Fig. \ref{fig:destr-nash-eq}. In the absence of any PT transformation, the set of Nash equilibria is  the blue line. If the first player has a  strictly concave $v_1$, the only equilibrium  preserved is the point $\x(1)$; similarly, if the second player has  a strictly concave $v_2$, the only equilibrium preserved is the point  $\x(2)$. If both $v_1$ and $v_2$ are strictly concave, none of the equilibria are preserved, unless $\p=\overline\xi\textbf{1}$, in which case $\x(1)=\x(2)=\x=(y_1,y_2)$, and this point is preserved. For higher $N$, we see a similar reduction of the dimension of the set of Nash equilibria, for every additional \lq irrational\rq~player.
\end{exmp}
While we assume that the $v_i$ are all concave, we see that a result analogous to Proposition \ref{prop:equilibrium-destr} can be obtained if the PT transformation was strictly convex. 
\begin{prop}\label{prop:equilibrium-destr-cvx}
    If some $v_k$ is strictly convex\footnote{The standing assumption for the rest of the results in this paper is that all $v_i$ are concave.}, then any element of  $X^*(\p)$ which is asymmetric for player $k$, will not belong to $\widetilde X^*(\p)$.
\end{prop}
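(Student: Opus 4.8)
The plan is to run the argument of Proposition~\ref{prop:equilibrium-destr} essentially verbatim, the only structural change being that the curvature of $v_k$ reverses the sign of the second-order correction term. Fix $\p\in\mathcal P_1$ (for $\p\notin\mathcal P_1$ we have $X^*(\p)=\phi$ and there is nothing to prove) and let $\x^*\in X^*(\p)$ be asymmetric for player~$k$; as in the proof of Proposition~\ref{prop:equilibrium-destr} this forces $\Delta_k:=y_k-x_k^*\neq 0$, and I would treat the case $\Delta_k>0$, the case $\Delta_k<0$ being identical up to sign changes.

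First I would substitute the aggregate constraint \eqref{eqn:simplified-x-y-p}, which holds on $X^*(\p)$, into $J_k$ exactly as in the proof of Proposition~\ref{prop:equilibrium-destr}, obtaining $J_k(x_k^*,\x_{-k}^*,\p,\xi_j)=\tilde b_k-(\xi_j-\overline\xi)\Delta_k$ with the same constant $\tilde b_k$, and hence
\[
\frac{\partial}{\partial x_k}\widetilde J_k(x_k^*,\x_{-k}^*,\p)=\sum_{j=1}^M q_j(\xi_j-\overline\xi)\,v_k^{\prime}\!\bigl(\tilde b_k-(\xi_j-\overline\xi)\Delta_k\bigr).
\]
This is precisely the expression that appears in the proof of Proposition~\ref{prop:equilibrium-destr}; the only difference is the monotonicity of $v_k^{\prime}$. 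Since $v_k$ is strictly convex, $v_k^{\prime}$ is strictly increasing, so writing $v_k^{\prime}(\tilde b_k-(z-\overline\xi)\Delta_k)=v_k^{\prime}(\tilde b_k)+\epsilon(z)$, the sign pattern of $\epsilon$ is the reverse of the one in \eqref{eq:epsilon-zed}: $\epsilon(z)>0$ for $z<\overline\xi$ and $\epsilon(z)<0$ for $z>\overline\xi$.

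Splitting the sum, $\sum_{j} q_j(\xi_j-\overline\xi)v_k^{\prime}(\tilde b_k)=v_k^{\prime}(\tilde b_k)(\mathbb{E}_q[\xi]-\overline\xi)=0$, whereas in $\sum_{j} q_j(\xi_j-\overline\xi)\epsilon(\xi_j)$ the factors $(\xi_j-\overline\xi)$ and $\epsilon(\xi_j)$ always have opposite signs (the term vanishing when $\xi_j=\overline\xi$), so the positive variance of $\xi$ forces this sum to be strictly negative. Therefore $\frac{\partial}{\partial x_k}\widetilde J_k(x_k^*,\x_{-k}^*,\p)<0$ (and $>0$ when $\Delta_k<0$), so $x_k^*$ fails the first-order optimality condition for player~$k$ and $\x^*\notin\widetilde X^*(\p)$.

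The one point I expect to require a word of care --- the only place strict convexity matters beyond the sign flip --- is that with $v_k$ strictly convex, $\widetilde J_k$ need no longer be concave in $x_k$, so a nonzero partial derivative does not by itself rule out a corner maximizer. I would dispose of this exactly as the concave case implicitly does: a strictly negative (resp. positive) partial derivative at $x_k^*$ lets player~$k$ strictly increase $\widetilde J_k$ by decreasing (resp. increasing) $x_k$, which contradicts the Nash condition as long as $x_k^*$ is not already pinned at the corresponding endpoint of $\mathcal X_k$; taking the $\mathcal X_i$ large enough that the equilibria in question are interior makes this immediate. Beyond that, all the algebra is inherited unchanged from Proposition~\ref{prop:equilibrium-destr}, so I anticipate no further obstacle.
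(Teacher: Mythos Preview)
Your proposal is correct and follows exactly the approach of the paper's own proof, which simply says to rerun the argument of Proposition~\ref{prop:equilibrium-destr} with $v_k'$ strictly increasing and the signs in \eqref{eq:epsilon-zed} reversed. Your extra remark about interiority (needed because $\widetilde J_k$ is no longer concave in $x_k$) is a legitimate point that the paper leaves implicit, but it does not change the substance of the argument.
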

\begin{proof}
    The proof is the same as for Proposition \ref{prop:equilibrium-destr}, with the change that $v_k^{\prime}$ here is strictly increasing. The appropriate changes in \eqref{eq:epsilon-zed}  lead to the result. 
\end{proof}
The strict nature of concavity or convexity is what leads to the vanishing of the asymmetric equilibria. Propositions \ref{prop:equilibrium-destr} and \ref{prop:equilibrium-destr-cvx} suggest that transformations of the EUT utilities by functions which are strictly concave for gains and strictly convex for losses, do not preserve most of the equilibria of the EUT setting. However, the PT game may still have new Nash equilibria created by the $v$ function. The existence and location of these equilibria will depend on the $v$ function, and the values of system parameters. Below, we show an example where for $\p=\overline\xi\textbf{1}$, and for a particular PT transformation function $v_i$, the PT game has no equilibria outside the set of the EUT equilibria.

\begin{exmp}\label{exmp:linear-deriv}
    For all $i$, let $v_i$ have derivative
    $v_i^{\prime}(z)=cz+d$ with
    $b_i-\overline\xi y_i<-\frac{d}{c}.$
    Then $\widetilde X^*(\overline\xi\textbf{1})\subseteq X^*(\overline\xi\textbf{1})$. Using Proposition \ref{prop:equilibrium-PT}, we see that $|\widetilde X^*(\overline\xi \textbf 1)|=1$. Thus the number of equilibria reduce from an uncountable set to a singleton, under the EUT to PT transformation.
\end{exmp}
\begin{proof}
    Say $\x^*\in\widetilde X^*(\overline\xi\textbf{1})$, but $\x^*\notin X^*(\overline\xi\textbf{1})$. Then, from \eqref{eqn:equilibrium-criterion},
    $$
        \Delta\triangleq \sum_{i=1}^N(x_i-y_i)\ne 0.
    $$
    Since $\x^*$ is a Nash equilibrium, we have
    $$
        \frac{\partial}{\partial x_i}\widetilde J_i(x_i,\x_{-i},\p) = 0~,i=1,...,N.
    $$
    This is equivalent to
    \begin{align*}
        \sum_{j=1}^Mq_j[v_i^{\prime}(J_i(x^*_i,\x^*_{-i},\p,\xi_j))(-\frac{2a_i}{N^2}\Delta+\xi_j-\overline\xi)]=0~,i=1,...,N.
    \end{align*}
    Using the fact that $v^{\prime}(z)=cz+d$, this can be simplified to, for each $i$,
    \begin{align*}
        \frac{2a_i^2}{N^4}\Delta^3-\frac{2a_i}{N^2}(b_i-\overline\xi y_i+\frac{d}{c})\Delta=(y_i-x_i)\sum_{j=1}^Mq_j(\xi_j-\overline\xi)^2.
    \end{align*}
    Summing over all $i$ and dividing by $\Delta$, we get
    \begin{align*}
        \sum_{i=1}^N \left[ \frac{2a_i^2}{N^4}\Delta^2-\frac{2a_i}{N^2}\left(b_i-\overline\xi y_i+\frac{d}{c}\right)\right]=-\sum_{j=1}^Mq_j(\xi_j-\overline\xi)^2.
    \end{align*}
    From our initial assumption on the parameters, and $\Delta$ being non zero, the LHS is strictly positive, while the RHS is negative, which is a contradiction. The result follows.

\end{proof}

\section{How Prospect Theory Shapes Market Equilibria}\label{sec:numerical-study}
We adapt the game model presented in Section \ref{sec:GameModel} to an electricity market application, with the aim to analyze the impact of users' irrationality on the market equilibria. The EUT  game corresponds to the case where the users behave rationally, and the PT game represents the case where the users' behaviour departs from the rational model. 

The coordinator is an aggregator which serves a community of $N$ consumers (users), by buying electricity on the wholesale market, and proposing two-part tariffs. During the first part of the day, the consumers can plan ahead and purchase a certain quantity of electricity. For consumer $i$, this quantity is denoted by  $x_i$; the fixed fee (price) charged by the aggregator for this purchase is $p$ per unit. The forecasted consumption of consumer $i$ is $y_i$. If the consumption of consumers exceeds $x_i$, then consumer $i$ needs to buy the $y_i-x_i>0$ from the aggregator at a variable charge of $c+\hat\xi$ where $c>0$ is a constant (e.g., peak electricity price from the wholesale market) and $\hat\xi$ is a random variable (e.g., due to the fluctuations of renewables). The variable in our model, $\xi$, is therefore equal to $c+\hat\xi$.  If the consumer $i$ uses less $y_i-x_i\leq 0$ than what it purchased, then it is refunded at the same variable charge.

The total cost paid by customer $i$ is a function that depends on their purchase, consumption, price of purchase and randomness from the renewables. It is in the interest of the consumers to have the total demand $\sum_iy_i$ to be close to the total units declared initially, $\sum_ix_i$.  We assume that the users are homogeneous, i.e., $a_i=a,b_i=b$  for all $i$. 

 We consider two candidates for the utility for the coordinator to maximize,
$$
     \overline J_0^A(\p,\x) =-||\x-\y||^2+||\p||^2,~
     \overline J_0^B(\p,\x) =\sum_{i=1}^N\overline J_i(x_i,\x_{-i},\p).
$$
 The first represents a symmetry plus revenue objective; the second represents economic efficiency.
 \subsection{EUT Game}
 Let 
 $$
     \mathcal P_1=\{p\textbf{1}:p\in\mathscr P\},
 $$
 where 
  $\mathscr P=[\underline p,\overline p]$ for some positive  $\underline p$ and $\overline p$. Using  \eqref{eq:selected-eq}, we see that 
$
     \overline J_0^A(\p,\hat\x^{*}(\p)),
 $
 for $N>2a$, is maximized at
 $$p^A_{\text{EUT}}=\mathrm{proj}_{\mathscr P}[\frac{\overline\xi}{1-\frac{4a^2}{N^2}}].$$
 Similarly we see that 
$
    \overline J_0^B(\p,\hat\x^{*}(\p)),
 $
 for $N>2$, is maximized when
$$
     p^B_{\text{EUT}}=\mathrm{proj}_{\mathscr P}[\overline\xi-\frac{a\sum_{i=1}^Ny_i}{N^2(\frac{N}{2}-1)}].
 $$
 For a large number of users (with $y_i$ bounded or with appropriate growth conditions), both $p^A$ and $p^B$ are approximately equal  to $\overline\xi$.
 \subsection{PT Game}
 In the PT case, since we lack analytical expressions for the equilibria, we will use numerical simulations to obtain insights. We assume that the distortions are symmetric, i.e., $v_i=v$. 
Let us consider a simple example, with 2 users and with $M=2$. Let $a_1=a_2=0.5$, $b_1=b_2=10$, $\xi_1=10$, $\xi_2=50$, $q_1=0.25$, $q_2=0.75$, $y_1=7$, $y_2=4$.
 We consider two forms of the distortion function $v$, and obtain Nash equilibria using a user-wise gradient ascent (Algorithm \ref{alg:con-alg-grad}), starting from random initializations.

 \noindent \textbf{Case 1}. Here $v(z)=1-\exp(-0.1z)$. From Theorem \ref{thm:uniq-PT-sym}, at each price we will have a unique equilibrium. It can also be shown, using \cite[Theorem 5]{tan2023necessary} that for each Nash equilibrium of the PT game, there is a neighbourhood  such that gradient play starting in this neighbourhood converges to this Nash equilibrium. Equilibria obtained by simulations for certain prices are depicted in Fig. \ref{fig:Nash-eq-creation-strict-ccv}, with the number next to the point indicating the price $p$.
 
 For both users, the value $x_i$ decreases as price increases. At  $p=\overline\xi=40$, the equilibrium obtained is at $(y_1,y_2)=(7,4)$. Other than this, we do not recover any EUT equilibrium, as expected from  Theorem \ref{thm:uniq-PT-sym}, and Propositions \ref{prop:equilibrium-PT} and \ref{prop:equilibrium-destr}. The equilibria obtained satisfy the structural condition obtained in Theorem \ref{thm:uniq-PT-sym},
 \begin{align}\label{eq:dashed-line}
     y_1-x_1=y_2-x_2,
 \end{align}
 which is represented by the dashed line. Numerically, with the price $p$ being restricted to the set $[1,50]$, we obtain the prices that maximize $\overline J_0^A$ and $\overline J_0^B$ as
$p^A_{\text{PT}}=\overline p=50,p^B_{\text{PT}}=\underline p=1$ respectively. The corresponding equilibria are indicated with a red circle and a black circle, respectively. Recall that in this case $p^A_{\text{EUT}}=\mathrm{proj}_{\mathscr P}[\frac{4}{3}\overline\xi]$. If $\overline p<\frac{4}{3}\overline\xi$, we will have $p^A_{\text{PT}}=p^A_{\text{EUT}}$. If $\overline p\ge \frac{4}{3}\overline\xi$, then $p^A_{\text{PT}}\ge p^A_{\text{EUT}}$. Irrational behaviour of the customers leads to a higher overall pricing by the coordinator, if they use the utility  $\overline J_0^A$. While $p^B_{\text{EUT}}$ is not available for $N=2$, we see that for small $N$, $p^B_{\text{EUT}}<\overline\xi$. The value of $\underline p$ may be even lower, which suggests that irrational preferences lower the price, if the coordinator uses the utility $\overline J_0^B$. Thus, if the coordinator uses the metric $\overline J_0^A$, which takes into account symmetry and revenue, irrationality results in an increase in prices. On the other hand, if the coordinator uses the utility $\overline J_0^B$, which is the economic efficiency, irrationality leads to price decrease.
\begin{figure}
     \centering
     \begin{tikzpicture}[scale=1]
\begin{axis}[
    axis lines=middle,
    grid=both,
    grid style={line width=.1pt, draw=gray!10},
    major grid style={line width=.2pt,draw=gray!50},
    minor tick num=4,
    xmin=-1, 
    xmax=30,
    ymin=-1, 
    ymax=30,
    x label style={at={(axis description cs:1.05,0.07)},anchor=north},
    y label style={at={(axis description cs:0.05,1)},rotate=0,anchor=south},
    xlabel={$x_1$}, 
    ylabel={$x_2$},
    label style={font=\bfseries\boldmath},
    scatter/classes={
        a={mark=square*, blue}, 
        b={mark=square*, red}, 
        c={mark=square, black}, 
        d={mark=triangle*, blue}, 
        e={mark=triangle*, red},
        f={mark=triangle*, black},
        g={mark=x, black}, 
        h={mark= diamond*, pink},
        i={mark=none}
    },
]

\addplot[
        scatter, 
        only marks,
        scatter src=explicit symbolic,
        nodes near coords*={\annotvalue},
        node near coord style={rotate=0, anchor=west, font=\scriptsize},
        visualization depends on={value \thisrow{annotation} \as \annotvalue},
    ]
    table[meta=label] {
        x       y            label  annotation
        25.16 22.17 a $~$
        23.34 20.34 a $2$
        18.95 15.95 a $5$
        15.17 12.17 a $10$
        11.75 8.75 a $20$
        9.44 6.44 a $30$
        7 4 a $40$
        2.62 -0.38 a $~$
        1.12 1.1 i $50$
        25.66 22.17 i $1$
    };
    \draw[line width=1pt,color=red] (axis cs:2.62,-0.38) circle (7);
    \draw[line width=1pt,color=black] (axis cs:25.16,22.17) circle (7);
    \addplot [dashed, domain=-1:29, mark=none] {\x-3};
\end{axis}

\end{tikzpicture}
     \caption{Nash equilibria in the  PT game, parametrized by price, for demand vector $(7,4)$ and $v(z)=1-\exp(-0.1z)$. The red  and black circles indicate equilibria for coordinator utilities $\overline J_0^A$  and  $\overline J_0^B$.}
     \label{fig:Nash-eq-creation-strict-ccv}
 \end{figure}
 \newline\noindent \textbf{Case 2}. In this case,  $v$ is given by \eqref{eq:log-v-function}. The linear portion, therefore, may lead to the existence of multiple Nash equilibria, some of which are preserved from the EUT game. Some of the  Nash equilibria we obtain are depicted in Fig. \ref{fig:Nash-eq-creation}, with the number next to the point indicating the price $p$. 
 The dashed line represents as before, equilibria that satisfy \eqref{eq:dashed-line}; while the $v_i$ are not exponential, as in Theorem \ref{thm:uniq-PT-sym}, a structural result similar to that of Theorem \ref{thm:uniq-PT-sym} can be obtained when the $v_i$ are identically logarithmic, and it can be shown that the equilibria will satisfy \eqref{eq:dashed-line}. Hence, these points are generated in regions where both users are being distorted by logarithmic preferences. The dotted lines represent the set of Nash equilibria in the EUT game with price $40$. We see that the PT game recovers some of these equilibria. It also recovers some of the the EUT equilibria for prices close to $40$ as well (not included in the figure to avoid too many close points in the same region). Since a large number of equilibria  lie on the dashed line, similar to what we had in Case 1, the behaviour of $p^A_{\text{PT}}$ is  similar to that in Case 1 as well, and we get $p^A_{\text{PT}}=\overline p=65$. However, due to $v$ having a linear region, we see that the EUT behaviour has some influence on the economic efficiency $J_0^B$.  The optimal equilibrium is at price $p^B_{\text{PT}}=5$. While $p^B_{\text{PT}}$ reduces from $p^B_{\text{EUT}}$, it does not reduce as much as in Case 1. Thus, the extent of reduction in price depends on the strict concavity of the consumer's prospect theoretic value function.

 \begin{figure}
     \centering
     \begin{tikzpicture}[scale=1]

\begin{axis}[
    axis lines=middle,
    grid=both,
    grid style={line width=.1pt, draw=gray!10},
    major grid style={line width=.2pt,draw=gray!50},
    minor tick num=3,
    xmin=-30, 
    xmax=60,
    ymin=-30, 
    ymax=60,
    x label style={at={(axis description cs:1.05,0.37)},anchor=north},
    y label style={at={(axis description cs:0.35,1)},rotate=0,anchor=south},
    xlabel={$x_1$}, 
    ylabel={$x_2$},
    label style={font=\bfseries\boldmath},
    scatter/classes={
        a={mark=square*, blue}, 
        b={mark=square*, red}, 
        c={mark=square, black}, 
        d={mark=triangle*, blue}, 
        e={mark=triangle*, red},
        f={mark=triangle*, black},
        g={mark=x, black}, 
        h={mark= diamond*, pink},
        i={mark=none}
    },
] 
\addplot[
        scatter, 
        only marks,
        scatter src=explicit symbolic,
        nodes near coords*={\annotvalue},
        node near coord style={rotate=0, anchor=west, font=\scriptsize},
        visualization depends on={value \thisrow{annotation} \as \annotvalue},
    ]
    table[meta=label] {
        x       y            label  annotation
        14.40 48.76 a $1$
        51.66 11.29 a $1$
        13.3 44.09 a $2$
        46.82 10.08 a $2$
        23.42 20.3 a $2$
        17.75 14.42 a $~$
        11.01 27.12 a $5$
        14.2 10.7 a $7$
        10.46 6.71 a $10$
        -1.26 12.26 a $~$
        0.28 10.71 a $~$
        6.43 5.47 a $~$
        4 7 a $~$
        1.71 9.28 a $40$
        -1.13 12.13 a $~$
        -1.51 12.51 a $~$
        -2.42 -1.72 a $~$
        -5.06 -4.39 a $~$
        -14.85 -14.25 a $~$
        -24.8 -24.25 a $~$
        36 -25 i $X^*(40)$
        -9.42 -1.72 i $53$
        -12.06 -4.39 i $55$
        -21.35 -14.25 i $60$
        -28 -20 i $65$
        18.75 14.42 i $5$
    };
    \draw[line width=1pt,color=red] (axis cs:-24.8,-24.25) circle (17);
    \draw[line width=1pt,color=black] (axis cs:17.75,14.42) circle (17);
    \node[rotate=40] at (axis cs:39,40) {\scriptsize $x_1-x_2=3$};
    \addplot [color=blue,dotted, domain=-30:60, mark=none] {11-\x};
    \addplot [dashed, domain=-25:55, mark=none] {\x-3};

\end{axis}

\end{tikzpicture}
     \caption{Nash equilibria in the  PT game, parametrized by price, for demand vector $(7,4)$ and $v$ given by \eqref{eq:log-v-function}. The red  and black circles indicate  equilibria for  coordinator  utilities $\overline J_0^A$  and  $\overline J_0^B$.}
     \label{fig:Nash-eq-creation}
 \end{figure}

\section{Conclusion}
We studied how the Nash equilibria of an aggregative coordination game are transformed when the players' irrationality is included, using prospect theory. We proved that for specific choice of value functions, symmetric equilibria are preserved while asymmetric equilibria are not. In many cases, the prospect theoretic transform reduces the size or dimension of the existing set of Nash equilibria. Further, we learned prospect theoretic Nash equilibria in a stylized electricity market using a distributed gradient play algorithm where an $N+1$-player updates the price to steer the users to a desired equilibrium. In particular, we show numerically that irrationality can drive fixed prices up or down, depending on the $N+1$-player's selection function. While we obtained results on the structure and uniqueness of PT equilibria in specific settings, a more general analytical characterization of newly created prospect theoretic equilibria will be considered in an extension, along with theoretical guarantees for the global convergence of the gradient play (or other) algorithms.
 
\bibliographystyle{ieeetr}
\bibliography{sample}
    
\end{document}